\newtheorem{theorem}{Theorem}
\newtheorem{corollary}{Corollary}
\newtheorem{definition}{Definition}
\newtheorem{proposition}{Proposition}
\newtheorem{question}{Question}
\theoremstyle{definition}
\newtheorem{example}{Example}
\begin{document}
\title{Local unambiguous unidentifiability, entanglement generation, and Hilbert space splitting}

\author{Saronath Halder}
\affiliation{Center for Theoretical Physics, Polish Academy of Sciences, Aleja Lotnik\'{o}w 32/46, 02-668 Warsaw, Poland}

\author{Remigiusz Augusiak}
\affiliation{Center for Theoretical Physics, Polish Academy of Sciences, Aleja Lotnik\'{o}w 32/46, 02-668 Warsaw, Poland}

\begin{abstract}
We consider collections of mixed states supported on mutually orthogonal subspaces whose rank add up to the total dimension of the underlying Hilbert space. We then ask whether it is possible to find such collections in which no state from the set can be unambiguously identified by local operations and classical communication (LOCC) with non-zero success probability. We show the necessary and sufficient condition for such a property to exist is that the states must be supported in entangled subspaces. In fact, the existence of such a set guarantees the existence of a type of entangling projective measurement other than rank one measurements and vice versa. This projective measurement can create entanglement from any product state picked from the same Hilbert space on which the measurement is applied. Here the form of the product state is not characterized. Ultimately, these sets or the measurements are associated with the splitting of a composite Hilbert space, i.e., the Hilbert space can be written as a direct sum of several entangled subspaces. We then characterize present sets (measurements) in terms of dimensional constraints, maximum-minimum cardinalities (outcomes), etc. The maximum cardinalities of the sets constitute a class of state discrimination tasks where several stronger classes of measurements (like separable measurements, etc.) do not provide any advantage over LOCC. Finally, we discuss genuine local unambiguous unidentifiability and generation of genuine entanglement from completely product states. 
\end{abstract}
\maketitle

\section{Introduction}\label{sec1}
Distinguishing quantum states \cite{Chefles00, Barnett09, Bae15} is a key step in many quantum information processing protocols \cite{Nielsen00}. If the possible states of a quantum system are orthogonal then, in principle those states can be distinguished by a suitable global measurement. Eventually, one can identify the unknown state of the quantum system perfectly. However, the situation becomes complex when the quantum system is a composite one and no global measurement is allowed. In this scenario, the subsystems of the composite system are distributed among several spatially separated parties. Moreover, these parties are restricted to perform local operations only but to make strategies they can communicate classically. So, here the task of identifying the unknown state of the quantum system must be accomplished by local operations and classical communication (LOCC) for a given set of possible states. We say this task of distinguishing quantum states by LOCC as local state discrimination task. There are many instances where it is not possible to identify the state of a quantum system perfectly by LOCC even if the possible states are orthogonal \cite{Bennett99-1, Bennett99, Walgate00, Ghosh01, Walgate02, Ghosh02, DiVincenzo03, Horodecki03, Ghosh04, Fan04, Horodecki04, Nathanson05, Watrous05, Niset06, Hayashi06, Cohen08-1, Feng09, Bandyopadhyay10, Bandyopadhyay11, Yu12, Zhang14, Chitambar14, Zhang15, Xu16, Halder18, Halder19, Cohen23}. It is important to mention here that apart from studying nonlocal properties of composite quantum systems \cite{Peres91, Bennett99-1, Walgate02, Horodecki03, Bandyopadhyay11, Chitambar13, Halder19}, local state discrimination tasks also find application in data hiding \cite{Terhal01, Eggeling02, DiVincenzo02, Lami18, Lami21, Ha24, Ha25} and secret sharing \cite{Markham08, Rahaman15}.

When the states of a given set cannot be distinguished perfectly, we often think about probabilistic distinguishability. In the probabilistic regime, there are two standard settings to consider: (i) minimum-error setting and (ii) unambiguous setting. For details, one can have a look into Refs.~\cite{Chefles00, Barnett09, Bae15}. While in the first method non-zero error probability is allowed, the second method is the error-free case with a provision of an inconclusive outcome. In our work, we focus on the unambiguous setting under LOCC, see Refs.~\cite{Chefles04, Duan07, Bandyopadhyay09-1, Chitambar14, Bandyopadhyay16, Zhang20, Halder22, Lugli21, Ha21, Cohen15}. For unambiguous distinguishability of a set of quantum states by LOCC, it is necessary that all of the states are unambiguously locally identifiable with non-zero success probability. On the other hand, if any state of a given set is unambiguously locally unidentifiable then, the set is called unambiguously locally indistinguishable. The necessary and sufficient condition for unambiguous local identification can be found in \cite{Chefles04}. In particular, it was proved in \cite{Chefles04} that to identify a state unambiguously under LOCC from a given set, it is necessary and sufficient to find a product state which has non-zero overlap with the state to be identified. But this product state must have zero overlap with the other states of the given set. Nevertheless, here we introduce a strong notion of indistinguishability for a class of orthogonal mixed states. We ask if it is possible to find a set of orthogonal mixed states with the property that {\it no state of the set is unambiguously locally identifiable with non-zero success probability}\footnote{Though we do not mention it everywhere but for this paper the following is true. When we say unambiguous identification of a state is possible, we basically mean $0<p<1$ and when it is not possible $p=0$, where $p$ is the corresponding success probability.}. Clearly, if such a set exists, then the state in which the given quantum system is prepared, cannot be learned with non-zero success probability by LOCC error-freely. Thus, a set with this notion of local indistinguishability may have cryptographic importance. However, while this property guarantees local unambiguous indistinguishability, it is not a necessary condition for such variant of indistinguishability. If we stick to pure states, then any orthonormal basis composed only of entangled states has the aforesaid property. However, a set of orthogonal mixed entangled states whose ranks add up to the total dimension of the considered Hilbert space, may not possess this property. 

There is a possibility that the present states have connection with some interesting measurements. This is because here we consider only those orthogonal mixed states that cover the whole Hilbert space, i.e., the sum of the ranks of the states is equal to the total dimension of the given Hilbert space. Now this property of the mixed states might be useful to fulfill the completeness relation for a measurement. This is how the possibility of connection arises. However, here we are interested in global measurements which can produce entanglement from product states. Since, we are interested in post-measurement state, we consider projective measurements to generate entanglement because in this case the post-measurement states are well-defined. Moreover, even if we consider a measurement defined by a set of positive operator valued measure (POVM) elements, such a measurement can be treated as a projective measurement in an extended Hilbert space \cite{Nielsen00}. So, here the question of interest is to find a global projective measurement other than a rank one measurement which can produce entanglement from any product state. Here the product state belongs to the same Hilbert space on which the measurement is acting but the form of the product state is not characterized. 

We next discuss about entangled subspaces briefly. This is because the mixed states, we are talking about or the type of projective measurements, we want to construct, are associated with such subspaces. For bipartite systems, an entangled subspace is a subspace of a given Hilbert space where it is not possible to find any product state. In multipartite systems, there are at least two prominent types of entangled subspaces. One is completely entangled subspace (CES) \cite{Parthasarathy04}--this is a subspace of a multipartite Hilbert space where it is not possible to find any completely product state. The other one is the genuinely entangled subspace (GES) \cite{Dem18}--this is a subspace of a multipartite Hilbert space where it is not possible to find any biseparable state. For the constructions of these entangled subspaces and their applications in quantum information processing, one can have a look into \cite{Bennett99, Wallach02, Parthasarathy04, Bhat06, Cubitt08, Walgate08, Johnston13, Dem18, Dem19, Agrawal19, Dem20, Lovitz22, Dem24} and the references therein. In this context, we mention that in Ref.~\cite{Augusiak11} a composite Hilbert space was expressed as a direct sum of two entangled subspaces and it was shown how this type of decomposition can be connected to a class of witness operators that are not optimal. See also \cite{Kye12} in this regard.

However, in our work, we show that the existence of the collections of mutually orthogonal mixed states with the above property guarantees the existence of global projective measurements which can create entanglement form product states and vice versa. In fact, this connection is through the splitting of a composite Hilbert space into several non-overlapping entangled subspaces. In other words, if we take the direct sum of these entangled subspaces then, the given composite Hilbert space is obtained. So, through our work we elucidate the connection of several topics: local unambiguous unidentifiability, entanglement generation from any product state via global projective measurements, and entangled subspaces. Notice that here the concept of splitting a composite Hilbert space into several non-overlapping entangled subspaces is a more involved problem than just constructing a single entangled subspace in the Hilbert space. In this way, apart from introducing a strong notion of local indistinguishability and establishing connections among several novel concepts, we also find new applications of already existing concepts.

We now provide the main contributions of this paper. 

\begin{itemize}
\item We introduce a notion of indistinguishability for collections of orthogonal mixed states: {\it no state from a given set of orthogonal mixed states can be unambiguously identified by LOCC with non-zero success probability}. 

\item We show that `entanglement in the mixed states' is not sufficient for the existence of the aforesaid notion of indistinguishability. Actually, the necessary and sufficient condition for this indistinguishability is that the mixed states must be supported in the entangled subspaces. This relates to the possibility of splitting a given multipartite Hilbert space into several non-overlapping entangled subspaces.

\item Then, we show that the existence of such splitting of the Hilbert space into direct sums of entangled subspaces implies the existence of global measurements which can create entanglement from unknown product states and vice versa. 

\item We further characterize present sets (measurements) in terms of dimensional constraints, maximum-minimum cardinalities (outcomes), etc. For example, it is impossible to construct such sets in a two-qubit system.

\item It is shown that the maximum cardinalities for the present sets constitute a class of state discrimination tasks for which several stronger classes of measurements (such as separable measurements, etc.) do not provide any advantage over LOCC.

\item Additionally, we introduce an elimination game for understanding the present sets in a better way.

\item For multipartite systems, we present genuine local unambiguous unidentifiability and generation of genuine entanglement from completely product states. These are done by splitting the Hilbert space into several non-overlapping genuinely entangled subspaces. 

\item We construct explicit examples to establish our claims.
\end{itemize}
 
In the following we first discuss the necessary assumptions. Then, we present the bipartite and the multipartite results one by one, addressing the main questions more precisely. Finally, we provide the conclusion.

\section{Assumptions}\label{sec2}
We consider quantum systems, associated with the Hilbert space, $\mathcal{H} = \mathbb{C}^{d_1}\otimes\mathbb{C}^{d_2}\otimes\cdots\otimes\mathbb{C}^{d_m}$, $\forall~i=1,\dots,m$, $d_i$ is the dimension of the $i^{th}$ subsystem, it is finite, and $d_i \geq 2$. Here $m$ is the number of parties and each party is holding only one subsystem. For bipartite systems, $m = 2$ and $\mathcal{H} = \mathbb{C}^{d_1}\otimes\mathbb{C}^{d_2}$, otherwise, $m>2$ for multipartite systems. Then, we consider a set of orthogonal quantum states $\mathbb{S}$, where all of the states are mixed. Mathematically, we can say the following.
\begin{equation}
\begin{array}{l}
\mathbb{S} \equiv \{\rho_1, \rho_2, \dots, \rho_n\},\\[1 ex]
\mbox{Tr}(\rho_i^2)<1,~ \mbox{Tr}[\rho_i\rho_j]=\delta_{ij},~ \forall i,j = 1,\dots,n,
\end{array}
\end{equation}
where `Tr' is the standard trace operation for matrices and $\delta_{ij}$ is the well-known Kronecker delta function. We also assume that the set of mixed states that we have considered, has a particular property: $\sum_{i=1}^n \mbox{rk}(\rho_i) = d_1d_2$ for bipartite systems and $\sum_{i=1}^n \mbox{rk}(\rho_i) = d_1d_2\dots d_m$ for multipartite systems, where $\mbox{rk}(\rho_i)$ is the rank of $\rho_i$ and it is $\geq2$. So, the supports of the mixed states are not overlapping and direct sum of these supports produce the whole composite Hilbert space, $\mathcal{H}$. We note that in a given set the mixed states are equally probable. These assumptions are applicable for all of the following results. 

\section{Bipartite systems}\label{sec3}
Now for bipartite mixed states, we want to explore their local indistinguishability property under unambiguous setting. However, when we wish to unambiguously distinguish states by LOCC, it is necessary to examine if each state of the given set is locally unambiguously identifiable. In this context, we ask the following question.

\begin{question}\label{qn1}
Is it possible to construct a set of orthogonal mixed states such that none of these states can be locally unambiguously identified with non-zero success probability?
\end{question}

\noindent
Here we try to solve this question under the assumptions mentioned in Sec.~\ref{sec2}. If we consider a set of states such that all of the states can be unambiguously identified by LOCC with some non-zero success probability, then we say that the given set is unambiguously locally distinguishable. On the other hand, if at least one state of the set cannot be unambiguously identified by LOCC then the set is unambiguously locally indistinguishable. Stretching this indistinguishability, we want a set of mixed states where none of the states can be unambiguously identified by LOCC with non-zero success probability. We say this indistinguishability property as `{\bf Property 1}'.

Let us now define the following sets: (i) $\mathcal{S}_1$ which represents the collection of all sets for which the states cannot be perfectly distinguished by LOCC, (ii) $\mathcal{S}_2$ which represents the collection of all sets for which the states cannot be unambiguously distinguished by LOCC with non-zero success probability, and (iii) $\mathcal{S}_3$ which represents the collection of all sets for which none of the states can be unambiguously identified by LOCC. Then, we have the following:

\begin{equation}\label{eq1}
\mathcal{S}_3\subset\mathcal{S}_2\subset\mathcal{S}_1.
\end{equation}

\noindent
A positive answer to Question \ref{qn1} tells us that $\mathcal{S}_3$ is non-empty. But $\mathcal{S}_3$ can be empty for some particular Hilbert spaces Clearly, answering Question \ref{qn1} is not easy because there can be different constraints. One such constraint is given as the following.

\begin{proposition}\label{prop1}
For two qubits, all orthogonal mixed states can be unambiguously identified by LOCC, does not matter how entangled they are.
\end{proposition}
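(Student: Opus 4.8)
The plan is to combine Chefles' identifiability criterion with an elementary algebraic fact about two-qubit subspaces. First I would invoke the assumptions of Sec.~\ref{sec2}: in $\mathcal{H}=\mathbb{C}^2\otimes\mathbb{C}^2$ the total dimension is $d_1d_2=4$, each $\mathrm{rk}(\rho_i)\geq 2$, and the ranks sum to $4$. The only way to satisfy these constraints is to have exactly two states, $\rho_1$ and $\rho_2$, each of rank two. Writing $S_1$ and $S_2$ for their supports, these are mutually orthogonal two-dimensional subspaces with $S_1\oplus S_2=\mathcal{H}$.

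Next I would translate the identifiability question into a statement about product vectors. By the criterion of Ref.~\cite{Chefles04}, $\rho_1$ is unambiguously locally identifiable iff there is a product vector $|\phi\rangle$ with $\langle\phi|\rho_1|\phi\rangle>0$ and $\langle\phi|\rho_2|\phi\rangle=0$. The second condition means $|\phi\rangle\perp S_2$, hence $|\phi\rangle\in S_2^{\perp}=S_1$; and any nonzero vector in $S_1$ automatically satisfies the first condition. Therefore $\rho_1$ is identifiable precisely when $S_1$ contains a product vector, and symmetrically for $\rho_2$ and $S_2$. The proposition thus reduces to showing that \emph{every} two-dimensional subspace of $\mathbb{C}^2\otimes\mathbb{C}^2$ contains a product vector, regardless of how entangled its generic element is.

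The core step---and the only place where the two-qubit dimension really enters---is this geometric fact, which I would prove using the identification $\mathbb{C}^2\otimes\mathbb{C}^2\cong M_2(\mathbb{C})$ under which product vectors correspond to rank-one matrices, i.e.\ matrices of vanishing determinant. If $M_1,M_2$ are the matrices associated with a basis of a two-dimensional subspace $S$, then $\det(\alpha M_1+\beta M_2)$ is a homogeneous polynomial of degree two in $(\alpha,\beta)$. Over $\mathbb{C}$ such a polynomial always has a nontrivial root by the fundamental theorem of algebra, yielding a nonzero $(\alpha,\beta)$ for which $\alpha M_1+\beta M_2$ has rank one; the corresponding vector is a product state lying in $S$. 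Equivalently, one may cite Parthasarathy's bound \cite{Parthasarathy04}: the largest completely entangled subspace of $\mathbb{C}^{d_1}\otimes\mathbb{C}^{d_2}$ has dimension $(d_1-1)(d_2-1)$, which equals $1$ for two qubits, so no two-dimensional entangled subspace can exist.

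Applying this to both $S_1$ and $S_2$ shows each support contains a product vector, so by the reduction above both $\rho_1$ and $\rho_2$ are unambiguously identifiable by LOCC, establishing the claim. I expect the main conceptual hurdle to be simply recognizing that the degree-two determinant form is forced to have a root---after that the argument is immediate. The entanglement of the states plays no role, because even a maximally entangled basis of $S_i$ cannot prevent a rank-one combination from existing within the support.
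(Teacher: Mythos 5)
Your proposal is correct and takes essentially the same route as the paper: restrict to two rank-2 states whose supports split $\mathbb{C}^2\otimes\mathbb{C}^2$, use the Chefles criterion to reduce identifiability of each $\rho_i$ to the existence of a product vector in its support, and invoke the fact that every two-dimensional two-qubit subspace contains a product vector. The only difference is that the paper cites this last fact from Ref.~\cite{Sanpera98}, whereas you prove it directly via the vanishing-determinant argument on $M_2(\mathbb{C})$ (equivalently via Parthasarathy's bound $(d_1-1)(d_2-1)=1$), which makes the argument self-contained but is not a different approach.
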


\begin{proof}
The above proposition is basically due to an existing result. It states that if we consider a two dimensional two-qubit subspace then such a subspace always contains a product state \cite{Sanpera98}. Now, let us assume that the given set of orthogonal mixed states is $\{\rho_1, \rho_2\}$. Since, the minimum rank of a mixed state is two, in case of two qubits a set at most contains two orthogonal mixed states. So, the supports of these states are basically two dimensional two-qubit subspaces and they contain at least one product state. In this way, it is always possible to have $\langle\phi_i|\rho_j|\phi_i\rangle=p_i\delta_{ij}$, $\forall i,j = 1,2$, where $\ket{\phi_i}$ are product states, contained in the supports of $\rho_i$ and $p_i$ are non-zero probabilities with which the states can be unambiguously identified by LOCC. In fact, this is known to be necessary and sufficient condition for unambiguous identification of states by LOCC \cite{Chefles04, Bandyopadhyay09-1}. These complete the proof.
\end{proof}

\begin{example}\label{ex1}
To constitute an example, we consider two orthogonal mixed states for a two-qubit system, $\rho_1$ = $(1-p)|\Phi^+\rangle\langle\Phi^+| + p|01\rangle\langle01|$ and $\rho_2$ = $(1-p^\prime)|\Phi^-\rangle\langle\Phi^-| + p^\prime|10\rangle\langle10|$. Here $p, p^\prime$ are non-zero probabilities and $|\Phi^{\pm}\rangle$ = $(1/\sqrt{2})(\ket{00}\pm\ket{11})$. Then, we consider local measurements on both qubits in $\{\ket{0}, \ket{1}\}$ basis. For both measurements, if the outcomes are different, then the states can be distinguished unambiguously though they are both entangled. When the outcomes are same, the situation is inconclusive. Clearly, entanglement within the states is not a sufficient condition for {\bf Property 1}.
\end{example}

Notice that the states $\{\rho_1,\rho_2\}$ cannot be perfectly distinguished by LOCC because the projection operators onto the supports of these states are entangled \cite{Chitambar14}. In this way, this example belongs to the collection $\mathcal{S}_1$ but not to $\mathcal{S}_2$ or $\mathcal{S}_3$. We will again discuss about the relation given in Eq.~(\ref{eq1}), in a later portion of this paper. However, from the above discussions, it is also clear that even if there exists a solution to Question \ref{qn1}, it cannot be found in the two-qubit Hilbert space. Now, before we discuss more about Question \ref{qn1}, we want to raise another question. Apparently, the second question does not have any connection with the first one. But we will prove that this is not the case. 

Let us now state the next question that we consider in our work. 

\begin{question}\label{qn2}
Is it possible to design a global projective measurement other than rank-1 measurements such that when it is applied on any product state, it always outputs an entangled state, provided that the product state is picked from the same Hilbert space on which the measurement is acting?
\end{question}

So, basically, we look for a set of projection operators $\{\Pi_1, \Pi_2,\dots, \Pi_n\}$, where $\sum_{i=1}^n\Pi_i=\mathbb{I}$, $\mathbb{I}$ is the identity operator acting on the associated Hilbert space, $\mathcal{H}$. Since, we are not interested in a rank-1 measurement, so, here $\mbox{rk}(\Pi_i)>1$ $\forall i = 1,\dots,n$. Basically, each $\Pi_i$ has some matrix representation, so here the ranks correspond to those matrices. Then, we consider the following black box scenario. We consider an arbitrary product state $\ket{\alpha}\ket{\beta}\in\mathcal{H}$, the form of which is not characterized. But the following is true 

\begin{equation}\label{eq2}
\frac{\Pi_i\ket{\alpha}\ket{\beta}}{\sqrt{\langle\alpha\beta|\Pi_i|\alpha\beta\rangle}} \neq \ket{\alpha^\prime}\ket{\beta^\prime}.
\end{equation}

\noindent
Clearly, the output is always entangled and here $\ket{\alpha}\ket{\beta}\equiv\ket{\alpha\beta}$. In comparison with a measurement in an entangled basis, the present measurement has a reduced number of measurement outcomes but still such a measurement can produce entanglement deterministically from a product state, no matter what is the form of the product state (see Fig.~\ref{fig1}). Roughly speaking, this scenario has similarity with a prepare and measure scenario \cite{Pawlowski11} where the preparation device is a black box but the measurement device is trusted. Interestingly, here to guarantee that the output state is entangled, we do not need to construct any conditional probabilities which is usual in a device-independent scenario. So, we do not need any independent and identically distributed copies, known to be difficult to achieve in experiments.

We say the property of the measurement which is mentioned in Question \ref{qn2} as `{\bf Property 2}'. Interestingly, to guarantee this property, it is not sufficient to consider entangled projectors\footnote{If the normalized version of a projection operator is an entangled quantum state, then the projection operator is said to be an `entangled projector'.}. To understand this, we go back to Example \ref{ex1}. We consider two entangled projectors onto the supports of $\rho_1$ and $\rho_2$. They constitute a two-outcome projective measurement on two qubits. However, this measurement cannot produce entanglement from the product states $\ket{01}$ or $\ket{10}$, but it can produce entanglement from $\ket{00}$ or $\ket{11}$. Thus, we argue that the {\bf Property 2} is a stronger notion than the notion of generating entanglement from some product states.  

\begin{figure}
\centering
\includegraphics[width=0.48\textwidth]{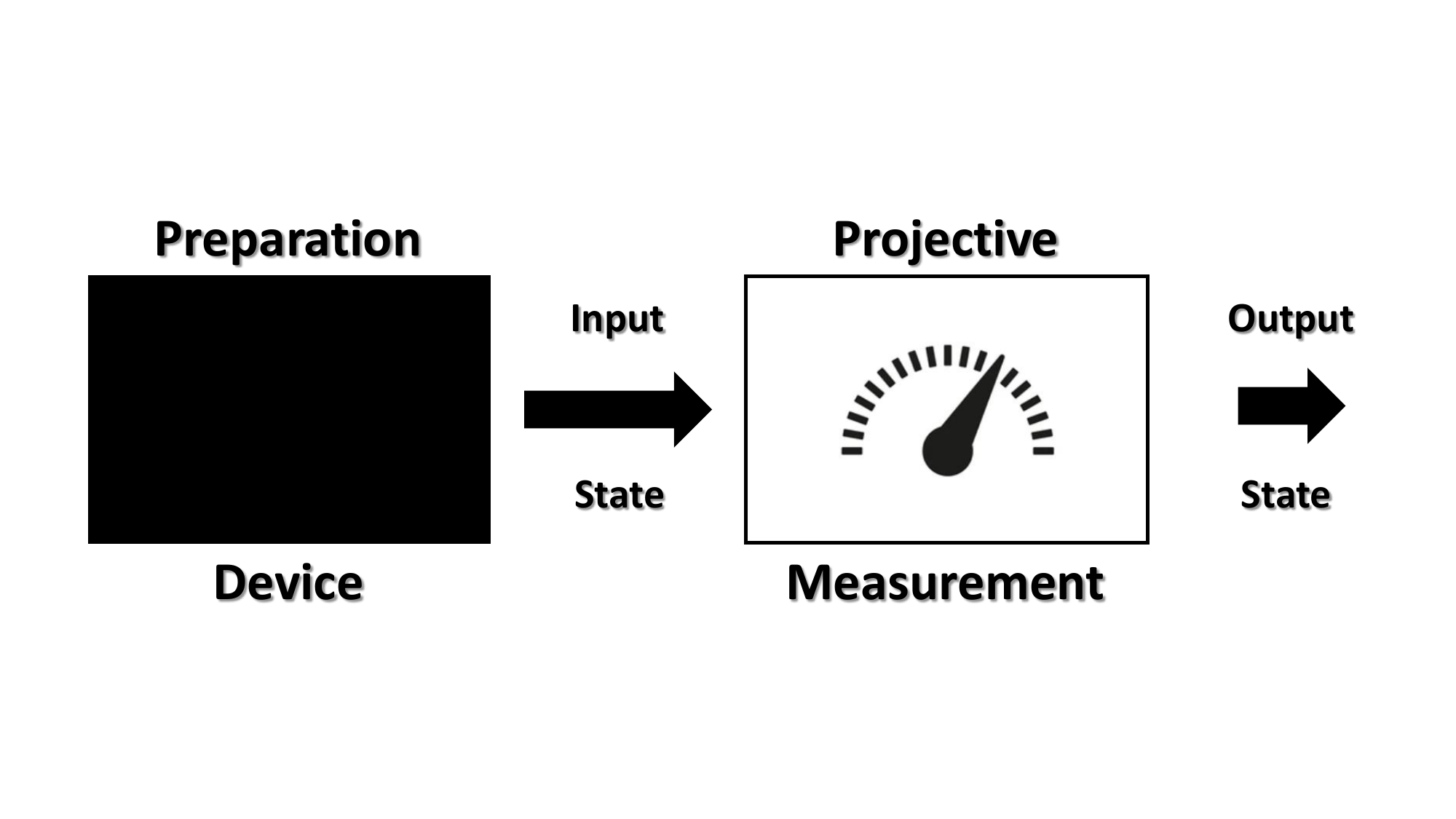}
\caption{The preparation device is a black box here. In particular, it is not known what is the form of the product state, produced by the device. Then, on this state a global projective measurement is performed. The measurement device is trusted. We want the output state to be entangled always.}\label{fig1}
\end{figure}

We are now ready to provide the following theorem. 

\begin{theorem}\label{thm1}
To ensure {\bf Property 1} ({\bf Property 2}), it is necessary and sufficient that the mixed states (projection operators) are supported in entangled subspaces. 
\end{theorem}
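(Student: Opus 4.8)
The plan is to reduce both statements to a single geometric fact about the relevant subspaces. Write $V_i = \mathrm{supp}(\rho_i)$ in the Property~1 setting and $W_i = \mathrm{range}(\Pi_i)$ in the Property~2 setting; in either case the assumptions of Sec.~\ref{sec2} guarantee mutually orthogonal subspaces with $\mathcal{H} = V_1 \oplus \cdots \oplus V_n$ (resp.\ $\mathcal{H} = W_1 \oplus \cdots \oplus W_n$), each of dimension at least two. The central observation I would exploit is that this orthogonal direct-sum structure collapses the two-sided overlap condition of Ref.~\cite{Chefles04} into a one-sided membership condition.

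For \textbf{Property 1}, I would invoke the criterion of Ref.~\cite{Chefles04}: $\rho_i$ is unambiguously locally identifiable iff there is a product vector $|\phi\rangle$ with $\langle\phi|\rho_i|\phi\rangle \neq 0$ and $\langle\phi|\rho_j|\phi\rangle = 0$ for all $j \neq i$. Since each $\rho_j \geq 0$, the vanishing conditions are equivalent to $|\phi\rangle \perp V_j$ for every $j \neq i$, and because the $V_j$ span $\mathcal{H}$ orthogonally this forces $|\phi\rangle \in V_i$; conversely, any non-zero $|\phi\rangle \in V_i$ automatically satisfies $\langle\phi|\rho_i|\phi\rangle \neq 0$. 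Hence $\rho_i$ is identifiable precisely when $V_i$ contains a product vector. Negating over all $i$, Property~1 holds iff every $V_i$ is an entangled subspace, which is exactly the claimed condition.

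For \textbf{Property 2}, the normalized outcome $\Pi_i|\alpha\beta\rangle$ lies in $W_i$ whenever it is non-zero, so if every $W_i$ is an entangled subspace the output is necessarily entangled and Property~2 follows immediately. For the converse I would run a feedback argument: if some $W_i$ contained a product vector $|\gamma\delta\rangle$, then feeding the black box the input $|\alpha\beta\rangle = |\gamma\delta\rangle$ yields $\Pi_i|\gamma\delta\rangle = |\gamma\delta\rangle$, a product output, contradicting Eq.~(\ref{eq2}). Thus Property~2 holds iff every $W_i$ is entangled. Since passing from a Property-1 set to the projectors onto its supports, and conversely from a Property-2 measurement to any full-rank mixed states on its ranges, identifies the two families $\{V_i\}$ and $\{W_i\}$, the same entangled-subspace condition governs both properties.

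The argument is conceptually short, and I do not expect a serious obstacle. The only steps requiring genuine care are the reduction in the second paragraph, where one must combine positivity of $\rho_j$ with the spanning property to turn the several vanishing-overlap constraints into the single membership statement $|\phi\rangle \in V_i$, and the feedback construction in the third paragraph, where the input is chosen adversarially to expose any product vector hiding in a range. Making these two reductions precise is the whole of the proof.
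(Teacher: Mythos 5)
Your proposal is correct and follows essentially the same route as the paper's own proof: the Chefles product-state criterion combined with the orthogonal direct-sum structure for \textbf{Property 1}, and the observation that a projector fixes any product vector lying in its range for \textbf{Property 2}. If anything, your write-up is slightly more explicit than the paper's at the two delicate points (using positivity of the $\rho_j$ to force $|\phi\rangle \in V_i$, and the adversarial feedback input), but these are refinements of the same argument, not a different one.
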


\begin{proof}
We start with a set of orthogonal mixed states \{$\rho_1$, $\rho_2$, \dots, $\rho_n$\}, for which $\sum_{i=1}^n\mbox{rk}(\rho_i)$ = $d_1d_2$, $d_1d_2$ is the total dimension of the considered Hilbert space. To identify any state $\rho_i$ unambiguously by LOCC, it is necessary and sufficient to find at least one product state $\ket{\phi}$, such that $\langle\phi|\rho_i|\phi\rangle = p>0$ and $\langle\phi|\rho_j|\phi\rangle=0$, $i\in\{1,2,\dots,n\}$ and $j\in\{1,2,\dots,n\}\setminus\{i\}$ \cite{Chefles04, Bandyopadhyay09-1}. Since, in our case, we have taken the condition $\sum_{i=1}^n\mbox{rk}(\rho_i)$ = $d_1d_2$, the product state $\ket{\phi}$ must be fully contained in the support of $\rho_i$. So, if we do not want the state $\rho_i$ to be locally unambiguously identified with non-zero success probability then it is necessary and sufficient that there is no such state $\ket{\phi}$ in the support $\rho_i$. Thus, to ensure {\bf Property 1}, it is necessary and sufficient that the states are supported in entangled subspaces.

Next, we consider a projective measurement defined by the projection operators $\{\Pi_1, \Pi_2,\dots, \Pi_n\}$. If these projection operators are supported in entangled subspaces then any product state $\ket{\alpha}\ket{\beta}$ is projected onto an entangled subspace after the application of a projection operator on it. So, the sufficient condition is straightforward. Now, for the necessary condition we first assume that there is a state $\ket{\alpha}\ket{\beta}$ contained in the support of any projection operator $\Pi_i$. We also assume an orthonormal basis $\{\ket{\psi_j}\}$ for the support of $\Pi_i$. So, $\Pi_i$ can be written as $\Pi_i = \sum_j|\psi_j\rangle\langle\psi_j|$. Since, $\ket{\alpha}\ket{\beta}$ is contained in a space for which the states $\{\ket{\psi_j}\}$ form a basis, $\ket{\alpha}\ket{\beta}$ should be written as linear combination of the states $\{\ket{\psi_j}\}$. So, $\ket{\alpha}\ket{\beta} = \sum_ja_j\ket{\psi_j}$, where $a_j$ are some complex numbers such that $\sum_j|a_j|^2 = 1$. In this way, if we apply $\Pi_i = \sum_j|\psi_j\rangle\langle\psi_j|$ on $\ket{\alpha}\ket{\beta} = \sum_ja_j\ket{\psi_j}$, then we do not obtain any entangled state. Clearly, to ensure {\bf Property 2}, it is necessary that $\ket{\alpha}\ket{\beta}$ is not contained in the support of any $\Pi_i$. In other words, it is necessary that $\{\Pi_i\}$ are supported in entangled subspaces.
\end{proof}

\begin{corollary}\label{coro1}
The existence of one property guarantees the existence of the other property and they are connected through the splitting of a composite Hilbert space into several non-overlapping entangled subspaces.
\end{corollary}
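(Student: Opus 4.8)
The plan is to derive the corollary directly from Theorem~\ref{thm1}, which has already reduced both \textbf{Property 1} and \textbf{Property 2} to the single geometric condition that the relevant supports are entangled subspaces. The first observation I would make is that, under the assumptions of Sec.~\ref{sec2}, mutual orthogonality of the states together with the rank condition $\sum_{i=1}^n \mathrm{rk}(\rho_i) = d_1 d_2 = \dim\mathcal{H}$ forces the supports $V_i = \mathrm{supp}(\rho_i)$ to be pairwise orthogonal subspaces whose dimensions add up to $\dim\mathcal{H}$. Hence $\mathcal{H} = V_1 \oplus V_2 \oplus \cdots \oplus V_n$ is a genuine direct-sum decomposition, and by Theorem~\ref{thm1} every $V_i$ is an entangled subspace precisely when \textbf{Property 1} holds. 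The very same decomposition arises on the measurement side: a projective measurement $\{\Pi_i\}$ with $\sum_i \Pi_i = \mathbb{I}$ and $\Pi_i \Pi_j = \delta_{ij}\Pi_i$ automatically splits $\mathcal{H}$ into the orthogonal ranges $W_i = \mathrm{range}(\Pi_i)$, and Theorem~\ref{thm1} states that \textbf{Property 2} holds exactly when every $W_i$ is entangled. Thus both properties are two readings of one and the same object, namely a splitting of $\mathcal{H}$ into mutually orthogonal entangled subspaces.

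With this correspondence in hand, I would establish the two implications by passing between states and projectors through this common decomposition. Starting from a set $\{\rho_i\}$ realizing \textbf{Property 1}, I set $\Pi_i$ to be the orthogonal projection onto $V_i = \mathrm{supp}(\rho_i)$. Then $\sum_i \Pi_i = \mathbb{I}$ because the $V_i$ direct-sum to $\mathcal{H}$, each $\Pi_i$ has rank $\mathrm{rk}(\rho_i) \ge 2 > 1$, and each $\Pi_i$ is supported in the entangled subspace $V_i$; by the sufficiency half of Theorem~\ref{thm1} this measurement enjoys \textbf{Property 2}. Conversely, starting from $\{\Pi_i\}$ realizing \textbf{Property 2}, I define $\rho_i$ to be any full-rank density operator on $W_i = \mathrm{range}(\Pi_i)$, for concreteness the maximally mixed state $\rho_i = \Pi_i/\mathrm{rk}(\Pi_i)$. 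Since the ranges are mutually orthogonal, $\mathrm{Tr}[\rho_i \rho_j] = 0$ for $i \ne j$; each $\rho_i$ has rank $\mathrm{rk}(\Pi_i) \ge 2$, so it is genuinely mixed; and $\sum_i \mathrm{rk}(\rho_i) = \sum_i \mathrm{rk}(\Pi_i) = \dim\mathcal{H}$ by completeness. As each $\rho_i$ is supported in the entangled subspace $W_i$, Theorem~\ref{thm1} yields \textbf{Property 1}.

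Finally I would collect these constructions into the explicit statement of the connection: whenever either property is realized the Hilbert space admits a decomposition $\mathcal{H} = \bigoplus_i V_i$ with every $V_i$ entangled and $\dim V_i \ge 2$, and conversely any such splitting produces both a set of mixed states with \textbf{Property 1} (maximally mixed states on the $V_i$) and a measurement with \textbf{Property 2} (projections onto the $V_i$). I do not expect a serious obstacle, since the content is essentially a dictionary between supports and projectors layered on top of Theorem~\ref{thm1}; the one point that requires care is the bookkeeping of the condition $\mathrm{Tr}[\rho_i\rho_j]=\delta_{ij}$ of Sec.~\ref{sec2}, which must be read as mutual orthogonality of the supports rather than a literal trace value (it cannot hold with equality on the diagonal, as $\mathrm{Tr}(\rho_i^2)<1$ for mixed states), together with checking that the constructed $\rho_i$ satisfy all the remaining stated assumptions.
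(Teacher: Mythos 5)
Your proposal is correct and follows essentially the same route as the paper: both properties are reduced, via Theorem~\ref{thm1} and the rank-sum condition, to one and the same splitting $\mathcal{H}=\bigoplus_i V_i$ into orthogonal entangled subspaces, which serves as the bridge between the two. Your write-up is simply a more explicit version of the paper's argument---spelling out the dictionary (projectors onto supports in one direction, maximally mixed states $\Pi_i/\mathrm{rk}(\Pi_i)$ on ranges in the other) and correctly flagging that $\mathrm{Tr}[\rho_i\rho_j]=\delta_{ij}$ must be read as off-diagonal orthogonality---where the paper states the correspondence more briefly.
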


The necessary and sufficient conditions in case of both properties are similar, i.e., the states (projection operators) must be supported in entangled subspaces. Furthermore, the sum of the ranks of the states (projection operators) is equal to the total dimension of the composite Hilbert space. So, for both properties, the Hilbert space is split into several non-overlapping entangled subspaces such that the direct sum of these subspaces produces the composite Hilbert space. This is how the existence of one property guarantees the existence of the other property.

\begin{corollary}\label{coro2}
From the above, it is clear that the maximum rank of a mixed state should be equal to the maximum dimension of an entangled subspace to ensure {\bf Property 1}. Since, in case of two qubits, the maximum dimension of an entangled subspace is one, it is always the case that two-qubit mixed states can be unambiguously identified by LOCC with some non-zero success probability.
\end{corollary}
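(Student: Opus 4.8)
The plan is to obtain this corollary as an immediate consequence of Theorem~\ref{thm1} together with the two-qubit subspace fact already invoked in Proposition~\ref{prop1}. First I would recall from Theorem~\ref{thm1} that {\bf Property 1} holds if and only if every state $\rho_i$ is supported in an entangled subspace. Since the support of $\rho_i$ is a subspace of dimension exactly $\mbox{rk}(\rho_i)$, requiring it to be entangled means that this $\mbox{rk}(\rho_i)$-dimensional subspace contains no product vector. Consequently, a necessary condition for {\bf Property 1} is that $\mathcal{H}$ admit an entangled subspace of dimension at least $\max_i \mbox{rk}(\rho_i)$. Reading this the other way around, the largest rank that is compatible with {\bf Property 1} is precisely the maximal dimension of an entangled subspace of $\mathcal{H}$ (the extreme case being a support that fills a maximal entangled subspace), which establishes the first assertion of the corollary.

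Next I would specialize to $\mathcal{H}=\mathbb{C}^2\otimes\mathbb{C}^2$. The key input is the fact used in Proposition~\ref{prop1} and attributed to \cite{Sanpera98}: every two-dimensional subspace of two qubits contains a product vector. Hence no entangled subspace of dimension $\geq 2$ exists, so the maximal dimension of an entangled subspace in two qubits equals one. Combining this with the rank bound above, any state ensuring {\bf Property 1} would need rank at most one; but by the assumptions of Sec.~\ref{sec2} each state in the set has $\mbox{rk}(\rho_i)\geq 2$, a rank-one support being pure rather than mixed. This contradiction shows that {\bf Property 1} cannot be realized for two qubits.

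Finally, I would strengthen "{\bf Property 1} fails" (which on its own only yields that \emph{some} state is identifiable) to the stated conclusion that \emph{all} two-qubit mixed states are identifiable, arguing exactly as in Proposition~\ref{prop1}: with at most two states in the set, each rank-two support is a two-dimensional subspace, so by \cite{Sanpera98} it contains a product vector $\ket{\phi_i}$ lying in the support of $\rho_i$ and, by orthogonality of the supports, having zero overlap with the other state; the Chefles criterion \cite{Chefles04, Bandyopadhyay09-1} then guarantees nonzero-probability unambiguous LOCC identification of every state. I expect the only delicate point to be phrasing the first sentence cleanly, i.e.\ stating the rank bound as the equality "maximal admissible rank $=$ maximal entangled-subspace dimension" rather than a mere inequality, and being careful that a would-be rank-one support corresponds to a pure state, which is what rules out the two-qubit case outright.
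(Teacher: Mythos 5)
Your proposal is correct and follows essentially the same route as the paper: Corollary~\ref{coro2} is read off from Theorem~\ref{thm1} (support of each $\rho_i$ must be an entangled subspace, so ranks are capped by the maximal entangled-subspace dimension) together with the fact that this maximal dimension is one for two qubits, which you obtain from the \cite{Sanpera98} fact already used in Proposition~\ref{prop1} while the paper equivalently invokes the general bound $(d_1-1)(d_2-1)$. Your final step, upgrading ``\textbf{Property 1} is impossible'' to ``every two-qubit mixed state is unambiguously identifiable'' by rerunning the argument of Proposition~\ref{prop1}, is a welcome explicit patch of a point the paper leaves implicit (it simply leans on Proposition~\ref{prop1}), but it is not a different method.
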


Let us recall that the maximum dimension of an entangled subspace in $\mathcal{H} = \mathbb{C}^{d_1}\otimes\mathbb{C}^{d_2}$ is $(d_1-1)(d_2-1)$ \cite{Wallach02, Parthasarathy04, Johnston13}. However, in the following, we construct examples via which we give positive answers to the questions raised earlier. We also discuss about maximum and minimum cardinalities of the sets associated with Question \ref{qn1}. Here cardinality is basically the number of outcomes associated with Question \ref{qn2}.

\subsection{Minimum dimensional construction}\label{sub1sec3}
Since, in $\mathbb{C}^2\otimes\mathbb{C}^2$ it is not possible to construct mixed states with {\bf Property 1}, we focus on $\mathbb{C}^2\otimes\mathbb{C}^3$, the minimum dimensional case and provide a construction. But before we present it, we mention that in $\mathbb{C}^2\otimes\mathbb{C}^3$, the maximum dimension of an entangled subspace is two. Therefore, we can get only three mixed states with the desired properties. This is the only possible cardinality that one can get here.

\begin{example}\label{ex2}
We consider the following entangled states first.

\begin{equation}\label{eq3}
\begin{array}{c}
\ket{\psi_1} = \ket{01} + \ket{10}, ~~ \ket{\psi_3} = \ket{02} + \ket{11},\\[1.0 ex]

\ket{\psi_2} = \ket{00} + \ket{12}, ~~ \ket{\psi_4} = \ket{00} - \ket{12},\\[1.0 ex]

\ket{\psi_5} = \ket{01} - \ket{10},\\[1.0 ex]

\ket{\psi_6} = \ket{02} - \ket{11}.
\end{array}
\end{equation}
We have not considered any normalization coefficients here, however, considering them, one can think about putting some bounds on the entanglement contents of the entangled subspaces that we are going to construct. We now consider the pairs of states: $\{\ket{\psi_1},\ket{\psi_2}\}$, $\{\ket{\psi_3},\ket{\psi_4}\}$, and $\{\ket{\psi_5},\ket{\psi_6}\}$ to produce entangled subspaces and then mixed states $\{\rho_1,\rho_2,\rho_3\}$ supported on these subspaces. These pairs form entangled subspaces of dimension two. The proof of producing entangled subspaces follows from the fact that if one takes superposition of the states within a pair, then, it is not possible to produce product states. For example, consider the following:
\begin{equation*}
\begin{array}{l}
a(\ket{01} + \ket{10}) + b(\ket{00} + \ket{12}) =\\[0.5 ex] 

\hspace{1.0 in}\ket{0}(b\ket{0} + a\ket{1}) + \ket{1}(a\ket{0} + b\ket{2}),\\[0.5 ex]

%a(\ket{02} + \ket{11}) + b(\ket{00} - \ket{12}) =\\[0.5 ex] 

%\hspace{1.0 in}\ket{0}(b\ket{0} + a\ket{2}) + \ket{1}(a\ket{1} - b\ket{2}),\\[0.5 ex]

%a(\ket{01} - \ket{10}) + b(\ket{02} - \ket{11}) =\\[0.5 ex] 

%\hspace{1.0 in}\ket{0}(a\ket{1} + b\ket{2}) - \ket{1}(a\ket{0} + b\ket{1}),
\end{array}
\end{equation*}
where $a, b$ are arbitrary complex coefficients such that $|a|^2+|b|^2=1$ and $|a|,|b|>0$. Notice that after superposition only entangled states are produced for arbitrary values of $a,b$. So, it follows from Theorem \ref{thm1} that if we consider any mixed states $\{\rho_1, \rho_2, \rho_3\}$, supported in these entangled subspaces, then none of these states can be unambiguously identified by LOCC with non-zero success probability. Thereafter, we consider three projection operators $\Pi_1$, $\Pi_2$, and $\Pi_3$ on these entangled subspaces. They constitute a rank-2 projective measurement which can create entanglement from any product state $\ket{\alpha}\ket{\beta}\in\mathbb{C}^2\otimes\mathbb{C}^3$.
\end{example}

From this example, it is now clear that $\mathcal{S}_3$ is non-empty, see (\ref{eq1}). Moreover, if we consider a set of only two states $\{\rho^\prime, \rho_3\}$, where $\rho^\prime$ is a rank-4 state and it is produced by taking a convex combination of $\rho_1$ and $\rho_2$, then, this set does not have {\bf Property 1} but it is unambiguously locally indistinguishable. In this case, the reason is simply because $\rho^\prime$ is no longer supported in an entangled subspace but $\rho_3$ is. In this way, $\{\rho^\prime, \rho_3\}$ belongs to $\mathcal{S}_2$ and $\mathcal{S}_1$ but not to $\mathcal{S}_3$. So, now due to the existence of sets like $\{\rho_1, \rho_2, \rho_3\}$ of the above example, $\{\rho^\prime, \rho_3\}$, and the set of Example \ref{ex1}, the relation of (\ref{eq1}) is established.   

The indistinguishable set $\{\rho_1, \rho_2, \rho_3\}$ of Example \ref{ex2} is important for another reason. This can be understood from the following. For all sets of mixed states constructed here, we assume that $\sum_{i=1}^n \mbox{rk}(\rho_i) = d_1d_2$. But we now mention that even if we do not assume this, there is no other type of sets for mixed states which can ensure {\bf Property 1} in $\mathbb{C}^2\otimes\mathbb{C}^3$. This is because when $\sum_{i=1}^2 \mbox{rk}(\rho_i) < 6$, it is always possible to find a state $\rho_i$ such that the space, orthogonal to the support of $\rho_i$, is not an entangled subspace. Therefore, this subspace, orthogonal to the support of $\rho_i$, contains at least one product state. This results the unambiguous local identification of the state other than $\rho_i$.   

However, we always compare the properties of pure states and mixed states. Example \ref{ex2} is also important from this point of view. In the following, we present a couple of points comparing the local (in)distinguishability properties between pure and mixed states.

\begin{itemize}
\item First, consider that two orthogonal pure states can always be perfectly distinguished by LOCC. But if we consider mixed states then, they may not be perfectly distinguished by LOCC \cite{Bandyopadhyay11, Halder21}. See also Example \ref{ex1}.

\item Now, in case of three pure states, it is always possible to extract `which state information' unambiguously by LOCC with some non-zero success probability \cite{Bandyopadhyay09-1}. However, here we have constructed a set of three mixed states for which it is not possible to extract `which state information' unambiguously by LOCC with some non-zero success probability.
\end{itemize}

Clearly, mixed states may exhibit more local indistinguishability compared to pure states.

\subsection{Higher dimensional generalization and degeneracy}\label{sub2sec3}
In $\mathbb{C}^2\otimes\mathbb{C}^3$, the only possible cardinality is three. We denote this by $(2,2,2)$, i.e., there are three states, each of which has rank two. In $\mathbb{C}^2\otimes\mathbb{C}^4$ there are two possibilities: $(2,2,2,2)$ and $(3,3,2)$. Here $(2,2,2,2)$ means that there are four states, each of which has rank two. Similarly, $(3,3,2)$ means that there are three states, two of which have rank three and the other state has rank two. Clearly, here the maximum and the minimum cardinality are deviating from each other within the same Hilbert space. We say this as {\it degeneracy} of cardinality and the number of possible cardinalities as the degree of degeneracy. For $\mathbb{C}^2\otimes\mathbb{C}^4$, the degree of degeneracy is two. Now along with the constructions of $\mathbb{C}^2\otimes\mathbb{C}^4$, we develop methodologies via which one can construct such mixed states in higher dimensions. We first present the methodology corresponding to the maximum cardinality for even dimensional subsystems. So, we consider the following points.

\begin{itemize}
\item For even dimensional subsystems, it is always possible to break the whole Hilbert space into several $\mathbb{C}^2\otimes\mathbb{C}^2$ subspaces. 

\item Then, for each subspaces, we construct entangled basis like the Bell basis and we construct pairs of states from different subspaces. Each of these pairs spans a two-dimensional entangled subspace. 

\item In this way, we can construct $d_1d_2/2$ rank-2 orthogonal mixed states. None of them can be unambiguously identified by LOCC with non-zero success probability. Here both $d_1$ and $d_2$ are even and $d_1d_2/2$ is the upper bound on the cardinality.
\end{itemize}

Let us now illustrate these points with an example.

\begin{example}\label{ex3}
Let us consider the following basis in $\mathbb{C}^2\otimes\mathbb{C}^4$ such that first four states span a subspace and the last four states span another subspace. The basis is given by-

\begin{equation}\label{eq4}
\begin{array}{cc}
\ket{\psi_1} = \ket{00} + \ket{11}, & \ket{\psi_2} = \ket{00} - \ket{11},\\[1 ex]
\ket{\psi_3} = \ket{01} + \ket{10}, & \ket{\psi_4} = \ket{01} - \ket{10},\\[1 ex]
\ket{\psi_5} = \ket{02} + \ket{13}, & \ket{\psi_6} = \ket{02} - \ket{13},\\[1 ex]
\ket{\psi_7} = \ket{03} + \ket{12}, & \ket{\psi_8} = \ket{03} - \ket{12}.
\end{array}
\end{equation}

\noindent
Next, we consider the pairs of states: $\{\ket{\psi_1}, \ket{\psi_5}\}$, $\{\ket{\psi_2}, \ket{\psi_6}\}$, $\{\ket{\psi_3}, \ket{\psi_7}\}$, and $\{\ket{\psi_4}, \ket{\psi_8}\}$. It is easy to check that if we take superposition of the states within a pair then it is not possible to produce any product state. Therefore, the states within a pair span a two-dimensional entangled subspace. So, any rank-2 mixed states supported on these entangled subspaces, say, $\{\rho_1, \rho_2, \rho_3, \rho_4\}$ cannot be unambiguously identified by LOCC by Theorem \ref{thm1}. Here the upper bound of the cardinality is four. Furthermore, if we consider projection operators $\{\Pi_1, \Pi_2, \Pi_3, \Pi_4\}$ onto the entangled subspaces, then they constitute a projective measurement with {\bf Property 2}.
\end{example}

If both dimensions are not even then, we can think about another methodology, given in the following. This one is also useful if we want to achieve the lower bound, i.e., the minimum cardinality of such sets. 

\begin{itemize}
\item We consider the computational (product) basis $\{\ket{00}$, $\ket{01}$, \dots, $\ket{d_1-1~d_2-1}\}$. Then, we consider combinations of these product states to produce entangled states and thereby, a complete entangled basis.

\item While constructing the entangled basis, we avoid taking states which form a complete basis for a lower dimensional subspace. For example, if we consider states $\{\ket{0x}\pm\ket{1x^\prime}, \ket{0x^\prime}\pm\ket{1x}\}$, $x,x^\prime$ $\in$ $\{0,$ $1$, \dots, $d_2-1\}$, $x\neq x^\prime$ which span $\mathbb{C}^2\otimes\mathbb{C}^2$ subspace, then taking superposition of any two states one can produce a product state. So, if we take $\ket{00}\pm\ket{11}$, then we consider a different product state to superpose with $\ket{01}$, instead of taking $\ket{10}$.

\item The next step is to consider combinations of these entangled states. But we check each pair of entangled states if they produce any product state after superposition. If not, then we consider superposition of more states and then we again check if one can get a product state after superposition. If not, then we get an entangled subspace of certain dimension. 

\item Finally, we have to find several such entangled subspaces which must be non-overlapping and then, taking direct sum of these subspaces, the given Hilbert space can be produced.
\end{itemize}

Let us now illustrate these points with an example.   

\begin{example}\label{ex4}
We construct the entangled basis in $\mathbb{C}^2\otimes\mathbb{C}^4$ (obeying the above points), given by- 
\begin{equation}\label{eq5}
\begin{array}{cc}
\ket{\psi_1} = \ket{00} + \ket{12}, & \ket{\psi_2} = \ket{00} - \ket{12},\\[1 ex]
\ket{\psi_3} = \ket{01} + \ket{13}, & \ket{\psi_4} = \ket{01} - \ket{13},\\[1 ex]
\ket{\psi_5} = \ket{02} + \ket{11}, & \ket{\psi_6} = \ket{02} - \ket{11},\\[1 ex]
\ket{\psi_7} = \ket{03} + \ket{10}, & \ket{\psi_8} = \ket{03} - \ket{10}.
\end{array}
\end{equation}
Notice that here no four states are of the form $\{\ket{0x}\pm\ket{1x^\prime},~ \ket{0x^\prime}\pm\ket{1x}\}$, $x,x^\prime$ $\in$ $\{0,$ $1$, \dots, $d_2-1\}$, $x\neq x^\prime$. Next, we take the following combinations: $\{\ket{\psi_1}, \ket{\psi_3}, \ket{\psi_5}\}$, $\{\ket{\psi_4}, \ket{\psi_6}, \ket{\psi_7}\}$, and $\{\ket{\psi_2}, \ket{\psi_8}\}$. Within a combination of three or two states, we check if any superposition of the states can produce a product state. In our case, this is not happening. In this way, we get a splitting of the $\mathbb{C}^2\otimes\mathbb{C}^4$ into two three-dimensional entangled subspace and one two dimensional entangled subspace. So, we can construct three mixed entangled states $\{\rho_1,\rho_2,\rho_3\}$ of ranks $(3,3,2)$ supported in these subspaces and they can have {\bf Property 1} by Theorem \ref{thm1}. Likewise, if we consider projection operators $\{\Pi_1, \Pi_2, \Pi_3\}$ onto the constructed entangled subspaces then this constitutes the projective measurement with {\bf Property 2}. This is the minimum cardinality case for $\mathbb{C}^2\otimes\mathbb{C}^4$ and the measurement is with minimum number of outcomes.
\end{example}

We now consider subsystems with prime dimensions. 

\begin{example}\label{ex5}
We first consider an entangled basis in $\mathbb{C}^3\otimes\mathbb{C}^3$:
\begin{equation}\label{eq6}
\begin{array}{ll}
\ket{\psi_1} = \ket{00} + \ket{11} + \ket{22}, & \ket{\psi_2} = \ket{00} - \ket{11},\\[1 ex]

\ket{\psi_3} = \ket{00} + \ket{11} - 2\ket{22}, & \\[1 ex]

\ket{\psi_4} = \ket{01} + \ket{12}, & \ket{\psi_5} = \ket{01} - \ket{12},\\[1 ex]

\ket{\psi_6} = \ket{02} + \ket{20}, & \ket{\psi_7} = \ket{02} - \ket{20},\\[1 ex]

\ket{\psi_8} = \ket{10} + \ket{21}, & \ket{\psi_9} = \ket{10} - \ket{21}.
\end{array}
\end{equation}
Next, we consider the following combination of states: $\{\ket{\psi_1}$, $\ket{\psi_4}$, $\ket{\psi_6}\}$, $\{\ket{\psi_2}, \ket{\psi_7}, \ket{\psi_9}\}$, and $\{\ket{\psi_3}, \ket{\psi_5}, \ket{\psi_8}\}$. Each of these combinations produces a three dimensional entangled subspace. Clearly, any three mixed entangled states $\{\rho_1,\rho_2,\rho_3\}$ of ranks $(3,3,3)$, supported in these three entangled subspaces, have the {\bf Property 1} by Theorem \ref{thm1}. Likewise, if we consider projection operators $\{\Pi_1, \Pi_2, \Pi_3\}$ onto these entangled subspaces then they constitute the projective measurement with {\bf Property 2}.
\end{example}

This is the minimum cardinality case for $\mathbb{C}^3\otimes\mathbb{C}^3$ and likewise, the measurement is with the minimum number of outcomes. There can be another option for minimum cardinality in $\mathbb{C}^3\otimes\mathbb{C}^3$, for example, $(4,3,2)$.

If we want only two mixed states or a two-outcome projective measurement with the present properties ({\bf Property 1} and {\bf Property 2}) then, the minimum dimension is $\mathbb{C}^3\otimes\mathbb{C}^4$ because in this Hilbert space the maximum dimension of an entangled subspace is six and taking direct sum of two such subspaces, it is possible to produce the whole $\mathbb{C}^3\otimes\mathbb{C}^4$ Hilbert space. Such a construction is given in \cite{Augusiak11}. But this construction is presented in a completely different context. However, if one of the subsystems is a qubit, then, such a splitting is impossible. 

\subsection{A no-go result}\label{sub3sec3}
We start with a couple of definitions.

\begin{definition}{}\label{def1}
[Distillable entanglement] A mixed state $\rho$ is said to have distillable entanglement \cite{Bennett96, Bennett96-1, Horodecki98, Watrous04}, iff it is possible to extract pure entangled state from it under LOCC with some non-zero probability. Note that for the distillation process, it may require many identical copies of $\rho$.
\end{definition}

\begin{definition}{}\label{def2}
[Classes of measurements] We consider a measurement defined by a set of positive semi-definite operators $\{\pi_i\}$. Now, (i) the measurement is a separable measurement (SEP) iff $\pi_i = \mathcal{A}_i\otimes\mathcal{B}_i$, $\mathcal{A}_i,\mathcal{B}_i$ are local operators, (ii) the measurement is a positive under partial transpose (PPT) measurement iff all PPT quantum states remain PPT after the application of the measurement on them.
\end{definition}

It is well known that the following relation holds.

\begin{equation*}
\mbox{LOCC}\subset\mbox{SEP}\subset\mbox{PPT}\subset\mbox{ALL},
\end{equation*}

\noindent
where ALL is the set of all measurements. From this relation, it is quite clear that there are instances where PPT measurements can provide advantage over SEP or LOCC and the instances where SEP can provide advantage over LOCC \cite{Duan09, Yu14, Bandyopadhyay22}. We are now ready to present the following proposition.

\begin{proposition}\label{prop2}
The maximum cardinalities corresponding to the present sets constitute a class of state discrimination tasks where SEP and PPT measurements are not advantageous over LOCC. 
\end{proposition}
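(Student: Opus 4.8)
The plan is to show that for a maximum-cardinality set the unambiguous success probability already vanishes under LOCC (by \textbf{Property 1}), and that it likewise vanishes under SEP and PPT measurements, so that none of the three classes can identify any single state and no advantage can be gained. First I would pin down what ``maximum cardinality'' forces. Since every rank is at least two and the ranks add up to $d_1d_2$, maximising the number of states makes every state rank two when $d_1d_2$ is even, and all but one rank two with a single remaining rank-three state when $d_1d_2$ is odd. In either case, by Theorem~\ref{thm1}, each support $V_i=\mathrm{supp}(\rho_i)$ is an entangled subspace of dimension at most three, the $V_i$ are mutually orthogonal, and $\bigoplus_i V_i=\mathcal{H}$.

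Next I would reduce the question to a single measurement operator. Suppose a measurement $\{\pi_i\}$ drawn from SEP or PPT unambiguously identifies $\rho_i$ through an outcome $\pi_i\ge 0$, i.e.\ $\mathrm{Tr}(\pi_i\rho_j)=0$ for all $j\neq i$ while $\mathrm{Tr}(\pi_i\rho_i)>0$. Writing $\rho_j=\sum_k\mu_k\ketbra{\chi_k}{\chi_k}$ with $\mu_k>0$ and $\{\ket{\chi_k}\}$ a basis of $V_j$, positivity of $\pi_i$ gives $\langle\chi_k|\pi_i|\chi_k\rangle=0$ and hence $\pi_i\ket{\chi_k}=0$, so $\pi_i$ annihilates every $V_j$ with $j\neq i$. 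As $\bigoplus_{j\neq i}V_j=V_i^{\perp}$, this means $\mathrm{supp}(\pi_i)\subseteq V_i$, and in particular $\mathrm{rk}(\pi_i)\le \dim V_i\le 3$.

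Finally I would exploit the entanglement of $V_i$. In the SEP case $\pi_i$ is separable, $\pi_i=\sum_k p_k\ketbra{a_kb_k}{a_kb_k}$, so its range is spanned by product vectors $\ket{a_kb_k}\in V_i$; but $V_i$ is entangled and contains no product vector, whence $\pi_i=0$, contradicting $\mathrm{Tr}(\pi_i\rho_i)>0$. In the PPT case $\pi_i$ is a PPT operator of rank at most three, so by the known fact that every PPT operator of rank at most three is separable one is reduced to the previous situation and again $\pi_i=0$. Thus no SEP or PPT measurement can identify any $\rho_i$, the unambiguous success probability is zero exactly as for LOCC, and no advantage is obtained. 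I expect the PPT step to be the crux: its validity rests entirely on the low-rank PPT-implies-separability theorem, which is precisely why the claim is confined to maximum cardinalities. There the supports have dimension at most three, whereas for smaller cardinalities the supports are four-dimensional or larger, a regime in which PPT entangled operators exist and the argument would break down.
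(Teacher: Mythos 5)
Your proof is correct, and it follows the same skeleton as the paper's: unambiguous identification together with $\bigoplus_j V_j=\mathcal{H}$ forces $\mathrm{supp}(\pi_i)\subseteq V_i$, the absence of product vectors in the entangled subspace $V_i$ rules out separable outcomes, and a low-rank entanglement result rules out PPT outcomes. The differences lie in how the two measurement classes are formalized. For SEP, the paper's Definition~2 considers only product elements $\mathcal{A}_i\otimes\mathcal{B}_i$ and simply notes such a form is incompatible with an entangled support; you treat general separable operators $\sum_k p_k\ketbra{a_kb_k}{a_kb_k}$, and your range argument (each $\ket{a_kb_k}$ lies in the range, which sits inside $V_i$) is the right way to handle this strictly larger class. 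For PPT, the paper defines a PPT measurement as one preserving PPT states and argues that the post-measurement state, being supported in the at most three-dimensional entangled subspace $V_i$, is distillable and hence not PPT; you instead use the standard element-wise definition ($\pi_i$ itself PPT) and invoke the equivalent fact that PPT operators of rank at most three are separable. These two lemmas are contrapositives of the same low-rank results the paper cites, so the mathematical core is identical; moreover, your version subsumes the paper's definition, since a PPT-preserving measurement applied to the maximally mixed state yields outputs proportional to the $\pi_i$, forcing each $\pi_i$ to be a PPT operator. Your closing observation also correctly identifies why the proposition is confined to maximum cardinalities: once supports have dimension four or more, PPT entangled operators exist and both arguments break down.
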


\begin{proof}
For a maximum cardinality, the mixed states can have rank two or three. Now suppose, we consider a separable measurement or a PPT measurement for which an operator $\pi_i$ unambiguously identifies a state $\rho_i$. Then, we need $\mbox{Tr}[\rho_j\pi_i]=p\delta_{ji}$, $p>0$. However, here it is not possible to find an operator $\pi_i$ which fits into a separable measurement or a PPT measurement. The reason is the following. The mixed states are supported in two-dimensional or three-dimensional entangled subspaces. So, $\pi_i$ is also fully contained into the support of $\rho_i$. This implies that it cannot have the form $\mathcal{A}_i\otimes\mathcal{B}_i$. Moreover, if the operator $\pi_i$ is applied on a PPT state, the output state resides into the support of $\rho_i$. This guarantees that the output state cannot be PPT. Because the states within the support of $\rho_i$ are at most rank-2 or rank-3 states and they must have distillable entanglement \cite{Horodecki03-1, Chen08}. Thus, there is no scope to show advantage of SEP and PPT measurements over LOCC.
\end{proof}

Next, we proceed to discuss about the connection of the present constructions with an {\it elimination game} under LOCC.

\subsection{Elimination game}\label{sub4sec3}
Let us now introduce an elimination game under LOCC. We consider a local measurement scheme and, based on each measurement outcome, attempt to determine that the quantum system is not prepared in a particular state or set of states. Here we are interested in unambiguous conclusions only. More precisely, we consider a given set $\{\rho_1,\rho_2,\dots\}$. Then, we consider a local measurement scheme defined by $\{\pi_1, \pi_2,\dots\}$. For local state elimination, one can expect the following relations to be satisfied: $\mbox{Tr}[\rho_i\pi_i]=0$ and $\mbox{Tr}[\rho_j\pi_i] \neq 0$ for $i\neq j$. Then, for the measurement outcome ``$i$'', one can unambiguously conclude that the given quantum system is not prepared in $\rho_i$. In case of state discrimination, we try to identify the state in which the quantum system is prepared. However, when state discrimination is not possible, we try to eliminate one or more states from the given set. Previously, state elimination problem was studied under orthogonality-preserving LOCC in \cite{Halder18}. However, here we consider any LOCC protocol, not just orthogonality-preserving LOCC and we want to examine if it is possible to eliminate state(s) corresponding to each measurement outcome. Let us take an example to understand this clearly.

For Example \ref{ex2}, the parties consider a measurement in a product basis $\{\ket{00}$, $\ket{01}$, $\ket{02}$, $\ket{10}$, $\ket{11}$, $\ket{12}\}$, known to be locally implementable. Then, corresponding to each measurement outcome, it is possible to eliminate one state from $\{\rho_1, \rho_2, \rho_3\}$ unambiguously. The entire list is given below.
\begin{equation}\label{eq7}
\begin{array}{cccccc}
\ket{00} &  \rightarrow & \rho_3, & \ket{01} &  \rightarrow & \rho_2, \\[0.5 ex]

\ket{02} &  \rightarrow & \rho_1, & \ket{10} &  \rightarrow & \rho_2, \\[0.5 ex]

\ket{11} &  \rightarrow & \rho_1, & \ket{12} &  \rightarrow & \rho_3, \\[0.5 ex]
\end{array}
\end{equation}
where the states $\ket{ij}$ (on the left side of the right arrows) correspond to measurement outcomes and the states $\rho_i$ (on the right side of the right arrows) are the states which get eliminated.

But it may not always possible to eliminate state(s) corresponding to each measurement outcome. To understand this, we consider Example \ref{ex5}, where if the measurement outcomes correspond to $\ket{00}$ or $\ket{11}$, then it is not possible to eliminate any of the mixed states. Here the key factor can be described as the following. Let us consider some product states. We say them as composition product states. We assume that with these product states, it is possible to construct certain orthogonal pure entangled states. We further assume each of these entangled states belongs to the composition of different mixed states. Next, we consider a measurement in a product basis which contains all of the composition product states. Now, if we obtain an outcome which corresponds to one of the composition product states, then it does not help in elimination. As this product state is present in all mixed states.  

A key point in this context is that if a set contains only two mixed states with {\bf Property 1}, then, unambiguous state elimination is obviously not possible. Because in this case, elimination of one state means identification of the other. 

\section{Multipartite systems}\label{sec4}
The bipartite concepts can be generalized to multipartite cases in at least two different ways. The first one corresponds to a situation in which a multipartite Hilbert space can be written as the direct sum of several completely entangled subspaces (CESs). This can be done with the help of bipartite constructions. For example, if we consider the constructions of $\mathbb{C}^2\otimes\mathbb{C}^4$ and consider the following mapping: $\{\ket{0}\rightarrow\ket{00}, \ket{1}\rightarrow\ket{01}, \ket{2}\rightarrow\ket{10}, \ket{3}\rightarrow\ket{11}\}$, then, we basically get examples for three-qubit systems. Thus, we obtain three-qubit mixed states such that no state from the set can be unambiguously identified by LOCC with non-zero success probability. These constructions also depict the structures of global projective measurements that can create multipartite entangled states starting from any three-qubit product state $\ket{\alpha}\ket{\beta}\ket{\gamma}$. 

In the second way of multipartite extension, a multipartite Hilbert space is expressed as a direct sum of several genuinely entangled subspaces (GESs). If we consider mixed states, supported in these GESs, then these mixed states have the {\bf Property 1} across every bipartition. Because the GESs are entangled subspaces across every bipartition, and then one can apply Theorem \ref{thm1}. We call this property {\it genuine local unambiguous unidentifiability}. We further consider projection operators onto these GESs which constitute a global projective measurement having {\bf Property 2} across every bipartition, i.e., it can generate genuine entanglement from completely product state. These introduce genuine multipartite extensions of Questions \ref{qn1} and \ref{qn2}. We mention that in \cite{Dem20}, the three-qubit Hilbert space is split into three orthogonal GESs. However, in the following we present an interesting example for four qubits.

\begin{example}\label{ex6}
We construct rank-2 mixed states $\{\rho_i\}_{i=1}^8$, picked from the subspaces spanned by $\{\ket{\psi_i}, \ket{\psi_{i+8}}\}$, $i = 1,\dots,8$. We now consider the following basis.
\begin{widetext}
\begin{equation}\label{eq8}
\begin{array}{cc}
\ket{\psi_1} = \ket{0000} + \ket{1111},~~ \ket{\psi_2} = \ket{0000} - \ket{1111}, & 

\ket{\psi_3} = \ket{0011} + \ket{1100},~~ \ket{\psi_4} = \ket{0011} - \ket{1100},\\[1 ex]

\ket{\psi_5} = \ket{0101} + \ket{1010},~~ \ket{\psi_6} = \ket{0101} - \ket{1010}, &

\ket{\psi_7} = \ket{0110} + \ket{1001},~~ \ket{\psi_8} = \ket{0110} - \ket{1001},\\[1 ex]

\ket{\psi_9} = \ket{0001} + \ket{0010} + \ket{0100} + \ket{1000}, & \ket{\psi_{10}} = \ket{0001} + \ket{0010} - \ket{0100} - \ket{1000},\\[1 ex]

\ket{\psi_{11}} = \ket{0001} - \ket{0010} + \ket{0100} - \ket{1000}, & \ket{\psi_{12}} = \ket{0001} - \ket{0010} - \ket{0100} + \ket{1000},\\[1 ex]

\ket{\psi_{13}} = \ket{1110} + \ket{1101} + \ket{1011} + \ket{0111}, & \ket{\psi_{14}} = \ket{1110} + \ket{1101} - \ket{1011} - \ket{0111},\\[1 ex]

\ket{\psi_{15}} = \ket{1110} - \ket{1101} + \ket{1011} - \ket{0111}, & \ket{\psi_{16}} = \ket{1110} - \ket{1101} - \ket{1011} + \ket{0111},\\[1 ex]
\end{array}
\end{equation}   
\end{widetext}
where $\{\ket{\psi_1},\dots,\ket{\psi_8}\}$ are GHZ states and $\{\ket{\psi_9}$, \dots, $\ket{\psi_{16}}\}$ are W-states \cite{Dur00}. It is interesting that for four qubits, we construct a genuinely entangled basis where half of the states are GHZ and the remaining states are W states. Now, if we construct a two dimensional subspace spanned by $\ket{\psi_i}$ and $\ket{\psi_{i+8}}$ for any $i=1,\dots,8$, then the space is genuinely entangled. One may check that $a\ket{\psi_i} + b\ket{\psi_{i+8}}$ is entangled across every bipartition, where $a,b$ are some complex numbers and $|a|^2 + |b|^2=1$. We next consider the mixed states $\{\rho_i\}_{i=1}^8$. These states cannot be unambiguously identified with non-zero success probability by LOCC. More importantly, this unidentifiability property is preserved across every bipartition. Thus, we obtain a genuine version of {\bf Property 1}. Now, if we consider projection operators $\{\Pi_i\}_{i=1}^8$ onto the supports of $\{\rho_i\}_{i=1}^8$, then these operators constitute an eight-outcome rank-2 projective measurement which can produce genuinely entangled state from any four-qubit completely product state. In this way, we obtain a genuine version of {\bf Property 2}. This is the maximum cardinality (number of outcomes) for a set of four-qubit mixed states (a measurement on four qubits) with the desired property. 
\end{example}

In the above example, $\{\rho_i\}_{i=1}^8$ are supported in two-dimensional GESs. Therefore, one can think about applying Proposition \ref{prop2} to demonstrate no advantage in any bipartition. 

\section{Conclusion}\label{sec5}
In this work, we have introduced a strong notion of local indistinguishability for a class of orthogonal mixed states. The necessary and sufficient condition for such a notion to exist is connected with the concept of entangled subspaces. In fact, we have considered a more involved problem. We have constructed entangled subspaces such that direct sum of these subspaces produce the given composite Hilbert space. Then, we have constructed global projective measurements, other than rank-one measurements, which can create entanglement from any product state belonging to the same Hilbert space on which the measurement is applied. We have shown that the existence of the present notion of local indistinguishability guarantees the existence of the aforesaid measurements and vice versa. We have presented several examples to characterize these problems. Finally, we have discussed possible multipartite extensions of these problems. 

{\it Acknowledgment}.~Funded by the European Union under Horizon Europe (grant agreement no.~101080086). Views and opinions expressed are however those of the author(s) only and do not necessarily reflect those of the European Union or the European Commission. Neither the European Union nor the granting authority can be held responsible for them. R. A. is also supported by the Polish National Science Center through the SONATA BIS Grant No. 2019/34/E/ST2/00369.

\bibliographystyle{apsrev4-2}
\bibliography{ref}
\end{document}